\pdfoutput=1
\IfFileExists{ieeeconf.cls}{%
  \documentclass[letterpaper,10pt,conference]{ieeeconf}%
}{%
  \documentclass[letterpaper,10pt,conference]{IEEEtran}%
  \providecommand{\overrideIEEEmargins}{}%
}
\IEEEoverridecommandlockouts
\usepackage{amsmath,amssymb,amsfonts}
\usepackage{mathtools}
\usepackage{bm}
\usepackage{booktabs}
\usepackage{array}
\usepackage{cite}
\usepackage{url}
\usepackage{graphicx}
\usepackage{balance}
\usepackage{tikz}
\usetikzlibrary{arrows.meta,positioning,calc,shadings,decorations.pathmorphing}

\makeatletter
\def\@begintheorem#1#2{\trivlist\item[\hskip\labelsep{\bfseries #1\ #2:}]\itshape}
\def\@opargbegintheorem#1#2#3{\trivlist\item[\hskip\labelsep{\bfseries #1\ #2\ (#3):}]\itshape}
\makeatother
\newtheorem{proposition}{Proposition}

\newcommand{\R}{\mathbb{R}}

\newcommand{\I}{\mathcal{I}}
\newcommand{\J}{\mathcal{J}}
\newcommand{\F}{\mathcal{F}}
\newcommand{\N}{\mathcal{N}}
\newcommand{\tr}{\operatorname{tr}}

\newcommand{\KL}{\operatorname{KL}}
\newcommand{\Tcone}{T_{\Theta}}
\newcommand{\Ncone}{N_{\Theta}}
\newcommand{\ind}{\iota_{\Theta}}

\title{Adaptive Control as an Information Gradient Flow:\\
Excitation, Constraints, Energy, and Stochastic Learning}

\author{Omkar Sudhir Patil%
\thanks{O. S. Patil is with the Department of Electrical and Computer Engineering, Louisiana State University, Baton Rouge, LA, USA.}%
}

\begin{document}
\maketitle
\thispagestyle{empty}
\pagestyle{empty}

\begin{abstract}
Adaptive control is usually developed through Lyapunov stability, excitation, and parameter convergence, whereas neighboring fields describe learning using convexity, information geometry, variational principles, and stochastic thermodynamics. This paper develops a constrained common language for these viewpoints. For linearly parameterized models, the excitation Gramian is simultaneously the Hessian of accumulated prediction loss and, under Gaussian observations, Fisher information up to scaling. Persistent, finite, and partial excitation become uniform, finite-horizon, and restricted temporal curvature; memory methods retain previously acquired curvature. On a closed convex parameter set, deterministic adaptation is a projected gradient/Onsager flow, with tangent and normal cones recovering projection, while concurrent learning supplies data-dependent symmetric dissipation. Reflected Langevin dynamics, no-flux Fokker--Planck evolution, and constrained free-energy flow provide the stochastic counterpart. The framework distinguishes information supplied by data from confinement supplied by regularization or hard constraints and connects familiar adaptive-control structures to modern variational and information-theoretic language.
\end{abstract}

\section{Introduction}
Adaptive control combines closed-loop stability with online parameter estimation. Lyapunov analysis can establish boundedness and tracking without parameter convergence, which depends on information supplied by the observed trajectory. Persistent excitation (PE) is the classical condition for convergence; finite- and partial-excitation conditions and memory methods relax sustained excitation \cite{anderson1982,boyd1983,green1986,chowdhary2010,chowdhary2011,parikh2019}. Statistics, optimization, and physics describe learning through loss curvature, Fisher information, gradient flows, dissipation, and free energy \cite{amari1998,amari2016,ly2017,jordan1998}. Gaudio et al.\ connect adaptive control to machine-learning optimization through update laws, regularization, projection, adaptive gains, and higher-order learning \cite{gaudio2019}. Building on these connections, a remaining question is how to develop a common description of accumulated data information, constrained parameter motion, closed-loop dissipation, and stochastic free energy.

More specifically, three questions motivate the present work. First, how can persistent, finite, partial, and retained excitation be represented through the curvature and statistical information of a single accumulated loss? Second, how do adaptive gains and parameter constraints define the geometry of learning, and how does information-driven parameter dissipation interact with the Lyapunov function used for closed-loop stability? Third, what stochastic formulation preserves this information--dissipation structure when parameter motion includes diffusion and boundary reflection? Addressing these questions requires distinguishing data information from confinement imposed by regularization or hard constraints. Bounded estimates and small tracking errors alone do not establish parameter identifiability, particularly for large models learned from limited data.

This paper develops a common mathematical framework based on accumulated prediction loss over a closed convex parameter set. For linearly parameterized models, the excitation Gramian is the Hessian of this loss and, under Gaussian observations, the Fisher information matrix up to noise scaling. Persistent, finite, and partial excitation correspond to uniform, finite-horizon, and restricted temporal curvature, respectively, while memory methods retain previously acquired curvature. Adaptive gains define the metric of a projected gradient or Onsager flow, and constraints enter through tangent cones and normal-cone reactions. A decomposition of the closed-loop error dynamics separates the Lyapunov function from the information energy and identifies the parameter-space dissipation supplied by concurrent learning. Reflected Langevin dynamics and their Fokker--Planck equation extend this formulation to stochastic learning through constrained free-energy dissipation. The framework thus distinguishes information acquired from data from confinement imposed by regularization or feasibility constraints and extends the discussion to nonlinear parameterizations through their Jacobian Gramians.

\begin{figure*}[t]
\centering
\resizebox{\textwidth}{!}{\input{fig_overview}}
\caption{Overview. The information force, the dissipation metric, and the constraint reaction merge in the adaptive flow (center); the surrounding panels are its constrained, closed-loop, inertial, and stochastic realizations.}
\label{fig:overview}
\end{figure*}
Fig.~\ref{fig:overview} summarizes these correspondences. Section~\ref{sec:energy} relates accumulated-loss curvature to excitation, and Section~\ref{sec:infogeo} connects this curvature to Fisher information and adaptive gains to a metric. Section~\ref{sec:lyap} develops the Lyapunov and Onsager formulations, Section~\ref{sec:stoch} presents the stochastic and free-energy counterparts, and Section~\ref{sec:unified} discusses nonlinear parameterizations. Section~\ref{sec:conclusion} concludes.

\section{Information Energy and Temporal Convexification}\label{sec:energy}
\subsection{Instantaneous and accumulated losses}\label{sec:losses}
Let $\Theta\subset\R^p$ be a nonempty closed convex set, let $\theta^\star\in\Theta$ be the unknown parameter, and let $\phi:\R_{\geq0}\to\R^p$ and $y:\R_{\geq0}\to\R$ be measurable, locally bounded signals related by the regression model
\begin{equation}
    y(t)=\phi(t)^\top\theta^\star+\varepsilon(t),
    \label{eq:regression}
\end{equation}
where $\varepsilon:\R_{\geq0}\to\R$ is a measurement disturbance. Begin with $\varepsilon\equiv0$ and define the instantaneous loss $\ell_t:\R^p\to\R_{\geq0}$ by
\begin{equation}
    \ell_t(\theta)\triangleq\frac{1}{2}\big(\phi(t)^\top\theta-y(t)\big)^2.
    \label{eq:instantloss}
\end{equation}
Its Hessian is
\begin{equation}
    \nabla^2\ell_t(\theta)=\phi(t)\phi(t)^\top.
    \label{eq:rankone}
\end{equation}
For $p>1$, a single observation therefore supplies curvature in at most one parameter direction, which is the optimization counterpart of the fact that one scalar equation cannot identify a parameter vector.

Let $\mu_t$ be a finite nonnegative Borel measure on $\R_{\geq0}$ that encodes how the data available at time $t$ are retained, and let $\ind:\R^p\to\{0,\infty\}$ be the convex indicator of $\Theta$, equal to $0$ on $\Theta$ and $\infty$ elsewhere. The accumulated loss $\J_{\mu,t}^{\Theta}:\R^p\to\R\cup\{\infty\}$ is
\begin{equation}
    \J_{\mu,t}^{\Theta}(\theta)\triangleq\int_{\R_{\geq0}} \ell_\tau(\theta)\,d\mu_t(\tau)+\ind(\theta).
    \label{eq:generalJ}
\end{equation}
Its smooth part is $\J_t\triangleq\int_{\R_{\geq0}}\ell_\tau\,d\mu_t(\tau)$, so that $\J_{\mu,t}^{\Theta}=\J_t+\ind$; the notation $\J_t$ is used whenever the choice of measure is immaterial. For a closed convex set $C\subset\R^p$ and $\theta\in C$, let $T_C(\theta)\triangleq\operatorname{cl}\{v\in\R^p:\theta+hv\in C\text{ for some }h>0\}$ denote the tangent cone and $N_C(\theta)\triangleq\{n\in\R^p:n^\top(\vartheta-\theta)\leq0\ \forall\vartheta\in C\}$ the normal cone. The indicator contributes no data curvature in the interior of $\Theta$; at $\theta\in\partial\Theta$ it restricts feasible displacements to $\Tcone(\theta)$. Throughout the paper, curvature refers to data curvature along admissible directions and never to curvature induced by the boundary.

For the Lebesgue measure on a window $[t,t+T]$, \eqref{eq:generalJ} reduces to the window loss
\begin{equation}
    \J_{t,T}(\theta)\triangleq\int_t^{t+T}\ell_\tau(\theta)d\tau.
    \label{eq:windowloss}
\end{equation}
Then
\begin{equation}
    \nabla^2\J_{t,T}(\theta)
    =G_{t,T}
    \triangleq\int_t^{t+T}\phi(\tau)\phi(\tau)^\top d\tau.
    \label{eq:gramian}
\end{equation}
Thus, although every instantaneous Hessian can be rank deficient, temporal aggregation can generate a full-rank Hessian. Temporal aggregation therefore convexifies the identification problem even when no single observation does. The following proposition states the resulting correspondence.

\begin{proposition}
For \eqref{eq:regression} with $\varepsilon\equiv0$, the regressor $\phi$ is persistently exciting with constants $T>0$ and $\alpha>0$, i.e.,
\begin{equation}
    G_{t,T}\succeq \alpha I,\qquad \forall t\geq 0,
    \label{eq:PE}
\end{equation}
if and only if the family of window losses $\{\J_{t,T}\}_{t\geq0}$ is uniformly $\alpha$-strongly convex on $\R^p$.
\end{proposition}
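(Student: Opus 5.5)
The plan is to reduce the statement to the fact that each window loss is an exact quadratic with constant Hessian $G_{t,T}$, given by \eqref{eq:gramian}. I would first verify that $\J_{t,T}$ is finite and twice continuously differentiable on $\R^p$ with $\nabla^2\J_{t,T}(\theta)=G_{t,T}$ for every $\theta$. Because $\phi$ and $y$ are measurable and locally bounded, the integrand in \eqref{eq:windowloss} is integrable over the compact window. Expanding the square, using $\varepsilon\equiv0$ so that $y(\tau)=\phi(\tau)^\top\theta^\star$, gives
\begin{equation*}
\J_{t,T}(\theta)=\tfrac12(\theta-\theta^\star)^\top G_{t,T}(\theta-\theta^\star).
\end{equation*}
This identity holds for all $\theta$ and requires no differentiation under the integral sign. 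The same conclusion follows more generally by writing $\J_{t,T}$ as $\tfrac12\theta^\top G_{t,T}\theta-b^\top\theta+c$ with finite integrals $b,c$.

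Next I would fix the definition being used. A family $\{\J_{t,T}\}_{t\ge0}$ is uniformly $\alpha$-strongly convex on $\R^p$ if, for every $t\ge0$, the function $\theta\mapsto\J_{t,T}(\theta)-\tfrac{\alpha}{2}\|\theta\|^2$ is convex, with the same $\alpha$ for all $t$. Equivalently,
\begin{equation*}
\J_{t,T}(\vartheta)\ge\J_{t,T}(\theta)+\nabla\J_{t,T}(\theta)^\top(\vartheta-\theta)+\tfrac{\alpha}{2}\|\vartheta-\theta\|^2\qquad\text{for all }\theta,\vartheta.
\end{equation*}
For a quadratic $q(\theta)=\tfrac12\theta^\top G\theta+(\text{affine})$, the exact second-order expansion gives
\begin{equation*}
q(\vartheta)-q(\theta)-\nabla q(\theta)^\top(\vartheta-\theta)=\tfrac12(\vartheta-\theta)^\top G(\vartheta-\theta).
\end{equation*}
The strong-convexity inequality therefore holds for all $\theta,\vartheta$ if and only if $d^\top G d\ge\alpha\|d\|^2$ for all $d\in\R^p$, that is, $G\succeq\alpha I$. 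This is a short two-direction argument. For the converse direction, I take $d=\vartheta-\theta$ arbitrary, which is possible because the domain is all of $\R^p$ and involves no indicator.

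Applying this equivalence with $G=G_{t,T}$ for each $t$ gives the result. The uniform lower bound \eqref{eq:PE} for all $t\ge0$ is exactly uniform $\alpha$-strong convexity of the family, with the constants $T$ and $\alpha$ carried over unchanged.

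I expect the main obstacle to be setting up the definitions rather than any estimate. The first point is justifying that $\J_{t,T}$ is a genuine quadratic, which needs integrability from local boundedness. The second is stating precisely which notion of strong convexity is meant, so that the converse direction is not circular. If a Hessian-based definition is preferred, $\nabla^2\J_{t,T}\succeq\alpha I$ pointwise, the equivalence is immediate from \eqref{eq:gramian}. I would then remark that it agrees with the first-order definition because the Hessian is constant.
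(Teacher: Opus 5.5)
Your proposal is correct and follows the paper's proof: both reduce the claim to $\nabla^2\J_{t,T}\equiv G_{t,T}$ from \eqref{eq:gramian}, combined with the fact that $\alpha$-strong convexity is equivalent to the Hessian lower bound $\succeq\alpha I$. The only difference is that you verify this equivalence directly from the exact quadratic expansion of $\J_{t,T}$, rather than citing it as a general fact, and this also avoids differentiating under the integral.
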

\emph{Proof:} From \eqref{eq:gramian}, $\nabla^2\J_{t,T}=G_{t,T}$, and a twice differentiable function is $\alpha$-strongly convex if and only if its Hessian is bounded below by $\alpha I$. \hfill$\square$

When PE is stated with the customary upper Gramian bound $G_{t,T}\preceq\bar\alpha I$, the same family is also uniformly $\bar\alpha$-smooth; for bounded $\phi$ and fixed $T$, such an upper bound is automatic. On $\Theta$, \eqref{eq:PE} implies strong convexity of $\J_{t,T}+\ind$, and in general only curvature along feasible directions matters. The proposition reinterprets PE as a condition on the loss rather than on the signal: every moving window yields an $\alpha$-strongly convex accumulated loss. Classical results linking PE to exponential parameter convergence \cite{anderson1982,boyd1983,green1986} can therefore be read as convergence results under uniform temporal curvature. Section~\ref{sec:relaxed} relaxes the uniformity to finite, partial, and retained curvature.

\subsection{Finite, partial, and retained curvature}\label{sec:relaxed}
The same viewpoint distinguishes several relaxed excitation notions. If there exists a finite interval $[t_0,t_0+T]$ for which
\begin{equation}
    G_{t_0,T}\succ0,
    \label{eq:FE}
\end{equation}
then the corresponding finite-horizon information energy is strongly convex even if later moving windows are not. Finite-excitation and memory-based approaches \cite{parikh2019,cho2018} exploit exactly this finite-horizon strong convexity, and initial-excitation methods impose \eqref{eq:FE} through a continuous-time integral over an initial interval \cite{roy2018}.

If $G_{t,T}$ has rank $r<p$, the energy is flat on $\ker G_{t,T}$ but curved on $\mathcal{S}_{t,T}=\operatorname{range}G_{t,T}$. For $v\in\mathcal{S}_{t,T}$,
\begin{equation}
    v^\top G_{t,T}v\geq
    \lambda_{\min}^{+}(G_{t,T})\|v\|^2,
    \label{eq:restricted}
\end{equation}
where $\lambda_{\min}^{+}$ is the smallest positive eigenvalue. Partial excitation is therefore naturally interpreted as restricted strong convexity. With constraints, a null direction that points outside $\Tcone(\hat\theta)$ is inadmissible even though it remains uninformative, so the data identify an equivalence set rather than an isolated parameter,
\begin{equation}
    \Theta_{t,T}^\star
    =(\theta^\star+\ker G_{t,T})\cap\Theta.
    \label{eq:minset}
\end{equation}

Historical-data methods can be represented by choosing a nonlocal measure in \eqref{eq:generalJ}. For a history stack of sampling times $\{t_i\}_{i=1}^N\subset[0,t]$ with weights $w_i>0$ and Dirac measures $\delta_{t_i}$,
\begin{equation}
    d\mu_t(\tau)=\sum_{i=1}^N w_i\delta_{t_i}(d\tau),
\end{equation}
which yields
\begin{equation}
    \nabla^2\J_{\mu,t}
    =\sum_{i=1}^Nw_i\phi(t_i)\phi(t_i)^\top.
    \label{eq:CLgramian}
\end{equation}
The rank condition used in concurrent learning can therefore be read as strong convexity of a discrete empirical information energy. Concurrent learning does not require the trajectory to continuously regenerate curvature; it retains curvature that was acquired earlier \cite{chowdhary2010,chowdhary2011}. Integral concurrent learning similarly constructs usable historical information without direct state-derivative measurements and establishes parameter convergence under a finite-excitation condition \cite{parikh2019}.

For exponential forgetting with rate $\gamma_f>0$,
\begin{equation}
    d\mu_t(\tau)=e^{-\gamma_f(t-\tau)}\mathbf{1}_{[0,t]}(\tau)d\tau,
\end{equation}
where $\mathbf{1}_{[0,t]}$ is the indicator of $[0,t]$, old curvature continuously decays. Classical recursive least-squares results with forgetting establish exponential convergence under PE \cite{johnstone1982}. The measure viewpoint makes the distinction between acquiring, retaining, and forgetting information explicit. Section~\ref{sec:infogeo} reads the same curvature as Fisher information.

\section{From Curvature to Information Geometry}\label{sec:infogeo}
\subsection{The same Gramian as Fisher information}\label{sec:fisher}
Now suppose the observations satisfy
\begin{equation}
    dy_t=\phi(t)^\top\theta^\star dt+\sigma d\omega_t,
    \label{eq:stochobs}
\end{equation}
where $\omega_t$ is a scalar standard Wiener process and $\sigma>0$ is known. By Girsanov's theorem, the negative log-likelihood of $\theta$ over $[t,t+T]$ is quadratic in $\theta$ up to a $\theta$-independent constant, and its Hessian, the Fisher information matrix $\I_{t,T}(\theta)\in\R^{p\times p}$, is
\begin{equation}
    \I_{t,T}(\theta)
    \triangleq\frac{1}{\sigma^2}
    \int_t^{t+T}\phi(\tau)\phi(\tau)^\top d\tau
    =\frac{1}{\sigma^2}G_{t,T}.
    \label{eq:fisher}
\end{equation}
Hence the excitation Gramian, the Hessian of the accumulated loss, and the Fisher information matrix \cite{ly2017} coincide up to noise scaling. Equation \eqref{eq:fisher} therefore gives an information-theoretic reading of excitation: PE requires every parameter direction to receive a uniformly positive amount of statistical distinguishability over each window. If $\theta^\star$ lies on $\partial\Theta$, only the Fisher quadratic form restricted to $\Tcone(\theta^\star)$ matters: a constraint removes alternatives, but it does not create Fisher information in directions the experiment never probes.

The Fisher reading also explains why different scalar summaries of $G$ encode different design objectives. The quantity $\lambda_{\min}(G)$ is the E-optimality criterion of experiment design, $\log\det G$ is the D-optimality criterion, and $\tr(G^{-1})$, when $G\succ0$, is the A-optimality criterion \cite{pukelsheim2006}: the first measures the worst-informed direction, the second the information volume, and the third the aggregate estimation variance. The adaptive-control emphasis on $\lambda_{\min}$ reflects its need for uniform convergence guarantees in all parameter directions. Section~\ref{sec:metric} turns from the Hessian to the metric in which the estimate moves.

\subsection{Metric interpretation of adaptive gains}\label{sec:metric}
Fisher information also defines a Riemannian metric on a statistical parameter manifold \cite{amari2016}. Let $\hat\theta:\R_{\geq0}\to\Theta$ denote the parameter estimate and $\Gamma=\Gamma^\top\succ0$ the adaptive gain. For a closed convex cone $K\subset\R^p$ and a symmetric positive-definite matrix $M$, let $\Pi_K^{M}[w]\triangleq\arg\min_{v\in K}\frac12(v-w)^\top M(v-w)$ denote the projection of $w\in\R^p$ onto $K$ in the metric induced by $M$. In Euclidean coordinates, unconstrained gradient adaptation is $\dot{\hat\theta}=-\Gamma\nabla\J_t(\hat\theta)$. On $\Theta$, the corresponding projected flow is
\begin{equation}
    \dot{\hat\theta}=\Pi_{\Tcone(\hat\theta)}^{\Gamma^{-1}}
    [-\Gamma\nabla_\theta\J_t(\hat\theta)],
    \label{eq:gradflow}
\end{equation}
or equivalently $0\in\Gamma^{-1}\dot{\hat\theta}+\nabla\J_t(\hat\theta)+\Ncone(\hat\theta)$. Thus the familiar adaptive-control projection operator is a constrained steepest-descent law: the unconstrained learning velocity is projected onto admissible instantaneous motion. Geometrically, $\Gamma$ specifies a metric or mobility for both descent and projection. If a parameter-dependent metric $M(\theta)\succ0$ is used, the same principle becomes
\begin{equation}
    \dot{\hat\theta}=\Pi_{\Tcone(\hat\theta)}^{M(\hat\theta)}
    [-M(\hat\theta)^{-1}\nabla\J_t(\hat\theta)].
    \label{eq:riemannian}
\end{equation}
Taking $M=\I_{t,T}$, the Fisher information matrix of \eqref{eq:fisher}, yields the natural gradient introduced in information geometry \cite{amari1998}. Gain selection is therefore a choice of geometry in parameter space, with the Fisher metric as one admissible choice. Normalization, covariance adaptation, least-squares gains, and natural-gradient preconditioning can then be compared by the metric with respect to which the estimate descends most steeply. For an inequality representation $\Theta=\{\theta:g(\theta)\leq0\}$ with $g:\R^p\to\R^m$ continuously differentiable and Jacobian $\nabla g\in\R^{m\times p}$, the associated constrained optimization problem may instead be described by the Lagrangian $\mathcal L_t(\theta,\lambda)\triangleq\J_t(\theta)+\lambda^\top g(\theta)$ with multiplier $\lambda\in\R^m_{\geq0}$ and dual gain $\Gamma_\lambda=\Gamma_\lambda^\top\succ0$. Primal descent together with projected dual ascent,
\begin{equation}
 \dot{\hat\theta}=-\Gamma[\nabla\J_t(\hat\theta)+\nabla g(\hat\theta)^\top\lambda],\qquad
 \dot\lambda=\Pi^{\Gamma_\lambda^{-1}}_{T_{\R^m_{\geq0}}(\lambda)}[\Gamma_\lambda g(\hat\theta)],
 \label{eq:primaldual}
\end{equation}
introduces the multiplier force $-\nabla g^\top\lambda$. Projection enforces instantaneous feasibility, whereas primal--dual adaptation addresses constraints through saddle-point dynamics and need not preserve primal feasibility during transients.

The metric interpretation also separates data curvature from curvature imposed by the designer. Adding a regularizer $R(\theta)=\frac{\kappa}{2}\|\theta\|^2$ gives
\begin{equation}
    \nabla^2(\J_t+R)=\nabla^2\J_t+\kappa I\succ0
    \label{eq:regcurv}
\end{equation}
even when $\nabla^2\J_t$ is singular. The optimization problem becomes strongly convex, but the observations have not become more informative. Regularization selects among poorly distinguished parameters; excitation changes what the data distinguish. Three mechanisms therefore confine the estimate, and they differ in origin. Data supply curvature through $\nabla^2\J_t$. Regularization supplies curvature through $R$: the $\sigma$-modification adds $-\sigma\Gamma\hat\theta$ to the update law, which is the gradient descent on $R$ with $\kappa$ equal to the $\sigma$-modification gain, so it bounds the estimate without excitation and shifts the minimizer away from $\theta^\star$, which is the familiar bias; the $e$-modification scales this term by the tracking error so that the confinement vanishes as the error vanishes \cite{ioannou1996,narendra1987}. Hard constraints supply no curvature at all: $\ind$ restricts the admissible set and acts only through the normal-cone reaction. The distinction among data information, soft regularization, and hard feasibility is particularly important for overparameterized or nonlinear adaptive models. Section~\ref{sec:lyap} places the constrained flow \eqref{eq:gradflow} within the Lyapunov analysis of adaptive control.

\section{Energy, Dissipation, and Lyapunov-Based Adaptation}\label{sec:lyap}
The constrained gradient-flow representation \eqref{eq:gradflow} reduces to ordinary gradient flow in the interior; at the boundary, the normal-cone reaction removes the outward component without supplying learning power. For a time-varying information energy $\J_t$ and an interior trajectory,
\begin{equation}
    \frac{d}{dt}\J_t(\hat\theta(t))
    =\partial_t\J_t(\hat\theta)
    -\|\nabla\J_t(\hat\theta)\|_{\Gamma}^2,
    \label{eq:movingpotential}
\end{equation}
where $\|v\|_\Gamma^2\triangleq v^\top\Gamma v$. If $\J$ is time invariant, the first term vanishes and information energy is dissipated monotonically. In online adaptation, however, $\partial_t\J_t$ is the power injected by the time variation of the information energy: parameter motion dissipates information energy while the time variation of $\J_t$ injects power. For the projected flow \eqref{eq:gradflow}, the inclusion $0\in\Gamma^{-1}\dot{\hat\theta}+\nabla\J_t(\hat\theta)+\Ncone(\hat\theta)$ and the complementarity $n^\top\dot{\hat\theta}=0$ satisfied by the active normal $n\triangleq-(\Gamma^{-1}\dot{\hat\theta}+\nabla\J_t(\hat\theta))\in\Ncone(\hat\theta)$ give $\nabla\J_t(\hat\theta)^\top\dot{\hat\theta}=-\|\dot{\hat\theta}\|_{\Gamma^{-1}}^2$, so \eqref{eq:movingpotential} holds with $\|\nabla\J_t\|_\Gamma^2$ replaced by the dissipation $\|\dot{\hat\theta}\|_{\Gamma^{-1}}^2$ of the projected velocity. Moreover, since $\theta^\star\in\Theta$, the definition of $\Ncone(\hat\theta)$ yields
\begin{equation}
    (\theta^\star-\hat\theta)^\top\big(\Gamma^{-1}\dot{\hat\theta}+\nabla\J_t(\hat\theta)\big)\geq0,
    \label{eq:projineq}
\end{equation}
which is the projection inequality of Lyapunov-based adaptive control \cite[Lemma E.1]{krstic1995}: the normal reaction performs no positive work against the true parameter.

The information energy $\J_t$ and the Lyapunov function used for closed-loop stability play different roles. In model-reference and nonlinear adaptive control, one commonly constructs
\begin{equation}
    V(e,\tilde\theta)
    \triangleq V_e(e)+\frac12\tilde\theta^\top\Gamma^{-1}\tilde\theta,
    \label{eq:lyapenergy}
\end{equation}
where $e\in\R^n$ is a tracking error, $V_e:\R^n\to\R_{\geq0}$ is positive definite, and $\tilde\theta\triangleq\theta^\star-\hat\theta$ is the parameter estimation error. The update law is selected so that the cross-terms cancel and $\dot V\leq0$ \cite{narendra1989,sastry1989,ioannou1996}.

The Lyapunov function $V$ and the information energy $\J_t$ therefore play distinct roles, which parallels the classical observation that PE is not required for boundedness or tracking but is central to parameter convergence. Composite and concurrent-learning methods can be viewed as designs that inject explicit information-energy terms into an otherwise Lyapunov-driven adaptive architecture \cite{chowdhary2011,parikh2019,cho2018}. Section~\ref{sec:skew} makes the two roles precise through a metriplectic decomposition of the error dynamics.

\subsection{Metriplectic structure of the error dynamics}\label{sec:skew}
Consider the canonical error model
\begin{equation}
    \dot e=Ae+BY(t)\tilde\theta,\qquad
    V\triangleq\frac12 e^\top Pe+\frac12\tilde\theta^\top\Gamma^{-1}\tilde\theta,
    \label{eq:canonicalerror}
\end{equation}
where $A\in\R^{n\times n}$ is Hurwitz, $B\in\R^{n\times m}$, $Y:\R_{\geq0}\to\R^{m\times p}$ is a bounded regressor, and $P=P^\top\succ0$ satisfies $A^\top P+PA=-Q$ for some $Q=Q^\top\succ0$. The tracking-driven update $\dot{\hat\theta}=\Gamma Y^\top B^\top Pe$ cancels the cross-term $e^\top PBY\tilde\theta$ in $\dot V$. With $z\triangleq[e^\top,\tilde\theta^\top]^\top$ and $\nabla V=[(Pe)^\top,(\Gamma^{-1}\tilde\theta)^\top]^\top$, write $AP^{-1}=S-R_e$ with skew part $S\triangleq\frac12(AP^{-1}-P^{-1}A^\top)$ and symmetric part $R_e\triangleq\frac12P^{-1}QP^{-1}\succ0$, and define
\begin{equation}
    J(t)\triangleq\begin{bmatrix}
        S & BY(t)\Gamma\\
        -\Gamma Y(t)^\top B^\top & 0
    \end{bmatrix},\quad
    R\triangleq\begin{bmatrix}R_e&0\\0&0\end{bmatrix},
    \label{eq:skewadaptive}
\end{equation}
so that $J^\top=-J$ and $R=R^\top\succeq0$. The off-diagonal blocks of $J$ encode the cancellation as a skew interconnection, which is the power-conserving interconnection of port-Hamiltonian systems and is well understood in passivity-based control \cite{vandSchaft2017,ortega2002}. The content of this section is what concurrent learning adds to that structure. Under ideal stored regression data $\{Y_i\}_{i=1}^N$, the concurrent-learning term is
\begin{equation}
    \dot{\tilde\theta}_{\rm CL}=-k_{\rm CL}\Gamma G\tilde\theta,
    \qquad
    G\triangleq\sum_{i=1}^{N}Y_i^\top Y_i\succeq0,
    \label{eq:CLdissipation}
\end{equation}
with gain $k_{\rm CL}>0$. Define the stored-data information energy $\J_{\rm CL}(\tilde\theta)\triangleq\frac{k_{\rm CL}}{2}\tilde\theta^\top G\tilde\theta$ and the symmetric operator $M\triangleq\operatorname{diag}(0,\Gamma)\succeq0$. The following proposition states the resulting structure.
\begin{proposition}
Along the error dynamics \eqref{eq:canonicalerror} with the update $\dot{\hat\theta}=\Gamma Y^\top B^\top Pe+k_{\rm CL}\Gamma\sum_{i=1}^{N}Y_i^\top(y_i-Y_i\hat\theta)$, where $y_i\triangleq Y_i\theta^\star$ denotes the ideal stored output so that the stored-data term equals $k_{\rm CL}\Gamma G\tilde\theta$,
\begin{equation}
    \dot z=[J(t)-R]\nabla V-M\nabla\J_{\rm CL},
    \label{eq:metriplectic}
\end{equation}
where $J$ is skew-symmetric and $R$, $M$ are symmetric positive semidefinite. Along solutions,
\begin{equation}
    \dot V=-\tfrac12 e^\top Qe-k_{\rm CL}\tilde\theta^\top G\tilde\theta,
    \label{eq:CLVdot}
\end{equation}
so $V$ dissipates in every parameter direction if and only if $G\succ0$, and $\dot\J_{\rm CL}=-k_{\rm CL}\tilde\theta^\top G\Gamma Y^\top B^\top Pe-k_{\rm CL}^2\tilde\theta^\top G\Gamma G\tilde\theta$. The GENERIC degeneracy conditions $J\nabla\J_{\rm CL}=0$ and $M\nabla V=0$ fail whenever $BY\Gamma G\tilde\theta\neq0$ and $\tilde\theta\neq0$, respectively.
\end{proposition}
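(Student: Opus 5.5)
The plan is to verify \eqref{eq:metriplectic} block by block, then compute $\dot V$ and $\dot\J_{\rm CL}$ by direct substitution, and finally evaluate the two degeneracy products. First the closed-loop error dynamics. Since $\theta^\star$ is constant, $\dot{\tilde\theta}=-\dot{\hat\theta}$. With $y_i=Y_i\theta^\star$, the stored-data term equals $k_{\rm CL}\Gamma\sum_iY_i^\top Y_i\tilde\theta=k_{\rm CL}\Gamma G\tilde\theta$. Hence
\[
\dot{\tilde\theta}=-\Gamma Y^\top B^\top Pe-k_{\rm CL}\Gamma G\tilde\theta .
\]

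Next, evaluate the right-hand side of \eqref{eq:metriplectic} using $\nabla V=[(Pe)^\top,(\Gamma^{-1}\tilde\theta)^\top]^\top$. For the $e$-block, $(S-R_e)Pe+BY\Gamma\Gamma^{-1}\tilde\theta=AP^{-1}Pe+BY\tilde\theta$, which matches \eqref{eq:canonicalerror}. For the $\tilde\theta$-block, $-\Gamma Y^\top B^\top Pe$ comes from $J$, and $R$ contributes nothing. Since $\J_{\rm CL}$ depends only on $\tilde\theta$, we have $\nabla\J_{\rm CL}=[0^\top,(k_{\rm CL}G\tilde\theta)^\top]^\top$, so $-M\nabla\J_{\rm CL}=[0^\top,-(k_{\rm CL}\Gamma G\tilde\theta)^\top]^\top$. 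Both blocks therefore match. Skew-symmetry of $J$ is read off from \eqref{eq:skewadaptive} using $S^\top=-S$ and $\Gamma^\top=\Gamma$. Semidefiniteness follows from $R_e=\frac12P^{-1}QP^{-1}\succ0$ and $M=\operatorname{diag}(0,\Gamma)\succeq0$. The identity $AP^{-1}=S-R_e$ is checked by symmetrizing: $\frac12(AP^{-1}+P^{-1}A^\top)=\frac12P^{-1}(PA+A^\top P)P^{-1}=-R_e$.

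For \eqref{eq:CLVdot}, write $\dot V=\nabla V^\top\dot z=\nabla V^\top(J-R)\nabla V-\nabla V^\top M\nabla\J_{\rm CL}$. The skew term vanishes. The term $-(Pe)^\top R_ePe$ equals $-\frac12e^\top Qe$. The last term is $-(\Gamma^{-1}\tilde\theta)^\top\Gamma k_{\rm CL}G\tilde\theta=-k_{\rm CL}\tilde\theta^\top G\tilde\theta$. The ``if and only if'' claim is that the quadratic form $\tilde\theta^\top G\tilde\theta$ is positive for all $\tilde\theta\neq0$ exactly when $G\succ0$; otherwise any $\tilde\theta\in\ker G\setminus\{0\}$ receives no parameter dissipation. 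For $\dot\J_{\rm CL}=k_{\rm CL}\tilde\theta^\top G\dot{\tilde\theta}$, substituting the $\tilde\theta$-dynamics gives the stated expression.

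For the degeneracy conditions, compute $J\nabla\J_{\rm CL}$. Its first block is $BY\Gamma k_{\rm CL}G\tilde\theta$ and its second block is $0$. Since $k_{\rm CL}>0$, it is nonzero iff $BY\Gamma G\tilde\theta\neq0$. Similarly, $M\nabla V=[0^\top,(\Gamma\Gamma^{-1}\tilde\theta)^\top]^\top=[0^\top,\tilde\theta^\top]^\top$, which is nonzero iff $\tilde\theta\neq0$.

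None of these steps is deep. The only place needing care is the sign and factor bookkeeping in the $R_e$ identity and in the dissipation term of $\dot V$, in particular confirming that $(Pe)^\top R_e(Pe)=\frac12e^\top Qe$.
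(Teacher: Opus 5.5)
Your proposal is correct and follows the paper's own proof step for step. You verify the metriplectic decomposition block by block using $AP^{-1}=S-R_e$, obtain $\dot V=\nabla V^\top\dot z$ with the skew term vanishing, and compute $\dot\J_{\rm CL}$, $J\nabla\J_{\rm CL}=[k_{\rm CL}BY\Gamma G\tilde\theta;\,0]$ and $M\nabla V=[0;\,\tilde\theta]$ directly; you also spell out the checks $\frac12(AP^{-1}+P^{-1}A^\top)=-R_e$ and $(Pe)^\top R_e(Pe)=\tfrac12 e^\top Qe$, which the paper leaves implicit.
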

\emph{Proof:} The first block of $[J-R]\nabla V$ is $(S-R_e)Pe+BY\Gamma\Gamma^{-1}\tilde\theta=Ae+BY\tilde\theta=\dot e$. Since $\dot{\tilde\theta}=-\dot{\hat\theta}=-\Gamma Y^\top B^\top Pe-k_{\rm CL}\Gamma G\tilde\theta$, the second block is the second block of $J\nabla V$ minus $\Gamma\nabla\J_{\rm CL}$, which is the second block of $M\nabla\J_{\rm CL}$. Then $\dot V=\nabla V^\top J\nabla V-\nabla V^\top R\nabla V-\nabla V^\top M\nabla\J_{\rm CL}$; the first term vanishes by skew symmetry, $\nabla V^\top R\nabla V=\frac12e^\top Qe$, and $\nabla V^\top M\nabla\J_{\rm CL}=k_{\rm CL}\tilde\theta^\top G\tilde\theta$. The expressions for $\dot\J_{\rm CL}$, $J\nabla\J_{\rm CL}=[k_{\rm CL}BY\Gamma G\tilde\theta;\,0]$, and $M\nabla V=[0;\,\tilde\theta]$ follow by direct computation. \hfill$\square$

In GENERIC thermodynamics, energy and entropy generate the reversible and irreversible parts, and the degeneracy conditions make each part conserve the other generator \cite{grmela1997,ottinger1997}. Proposition~2 places adaptive error dynamics in the same two-generator form, with the Lyapunov function generating the reversible exchange and the tracking dissipation $R$, and the stored-data information energy generating the data-driven irreversible part, but with both degeneracy conditions violated. Their violation is the mechanism of convergence: the irreversible part is designed to decrease $V$, and the reversible exchange drives $\J_{\rm CL}$ through the tracking error. The same $G$ that appears as the excitation Gramian, the accumulated-loss Hessian, and the Fisher information matrix therefore sets the coercivity of the irreversible part, so for the linear parameterization \eqref{eq:regression} full excitation, strong convexity, statistical identifiability, and full parameter-space dissipation are four views of one positive-definiteness condition. With parameter constraints, \eqref{eq:metriplectic} acquires the reaction $[0;\,\Gamma n_\Theta]$ with $n_\Theta\in\Ncone(\hat\theta)$, and $\tilde\theta^\top n_\Theta\leq0$ by the definition of $\Ncone(\hat\theta)$ since $\theta^\star\in\Theta$, so the reaction cannot increase $V$; a primal--dual formulation \eqref{eq:primaldual} instead introduces the multiplier force $\nabla g^\top\lambda$, which for an affine constraint contributes a further skew block between primal and dual states. The reaction is neither generator: feasibility is not excitation. Section~\ref{sec:onsager} derives the projected flow from the speed-gradient and Onsager principles.

\subsection{Speed-gradient and Onsager viewpoints}\label{sec:onsager}
The gradient-flow interpretation also has direct antecedents within adaptive control. Fradkov's speed-gradient method organizes motion in the adjustable parameters using the gradient of the rate of change of a goal functional \cite{fradkov1980,fradkov2020}. Given a goal functional $Q:\R^n\times\R^p\times\R_{\geq0}\to\R$ of the closed-loop state $x\in\R^n$, define its speed along the closed-loop dynamics by
\begin{equation}
    \omega(x,\hat\theta,t)\triangleq\frac{d}{dt}Q(x,\hat\theta,t).
\end{equation}
A canonical speed-gradient update has the form
\begin{equation}
    \dot{\hat\theta}=-\Gamma\nabla_{\hat\theta}\omega.
    \label{eq:speedgradient}
\end{equation}
Speed-gradient adaptation selects parameter motion from the sensitivity of the speed of the goal functional rather than of $Q$ itself. For the error model \eqref{eq:canonicalerror} with $Q(e)=\frac12e^\top Pe$, the speed is $\omega=e^\top P(Ae+BY\tilde\theta)$, so $\nabla_{\hat\theta}\omega=-Y^\top B^\top Pe$ and \eqref{eq:speedgradient} returns the standard update $\dot{\hat\theta}=\Gamma Y^\top B^\top Pe$ of Section~\ref{sec:skew}; more generally, the scheme recovers many adaptation and identification algorithms otherwise obtained by direct Lyapunov arguments \cite{fradkov1980}. The speed-gradient scheme is therefore an early connection between Lyapunov synthesis and gradient-flow adaptation.

Onsager's variational principle for dissipative dynamics \cite{onsager1931a,onsager1931b,doi2011,doi2021,wang2021} provides a second connection. For the information energy $\J_t(\theta)$, introduce the quadratic dissipation potential $\Psi:\R^p\to\R_{\geq0}$,
\begin{equation}
    \Psi(v)\triangleq\frac12 v^\top\Gamma^{-1}v,
\end{equation}
and the instantaneous Rayleighian
\begin{equation}
    \mathcal R_t(v;\theta)
    \triangleq\Psi(v)+\nabla\J_t(\theta)^\top v.
    \label{eq:rayleighian}
\end{equation}
The explicit time dependence $\partial_t\J_t$ does not affect minimization with respect to the velocity $v$. Onsager's principle selects the admissible velocity by
\begin{equation}
    \dot{\hat\theta}
    =\arg\min_{v\in\Tcone(\hat\theta)}\mathcal R_t(v;\hat\theta)
    =\arg\max_{v\in\Tcone(\hat\theta)}
    \{-\nabla\J_t^\top v-\Psi(v)\}.
    \label{eq:onsager}
\end{equation}
The optimality condition is
\begin{equation}
 0\in\Gamma^{-1}\dot{\hat\theta}+\nabla\J_t(\hat\theta)+
 N_{\Tcone(\hat\theta)}(\dot{\hat\theta}),
\end{equation}
which recovers \eqref{eq:gradflow}, since $N_{\Tcone(\hat\theta)}(\dot{\hat\theta})\subset\Ncone(\hat\theta)$. Hence $\Gamma^{-1}$ is a friction metric and projection is the constrained Onsager solution. For a set $\Theta=\{\theta:h_i(\theta)\geq0,\ i=1,\ldots,m\}$ with continuously differentiable $h_i:\R^p\to\R$ (equivalently $g_i=-h_i\leq0$ in the primal--dual convention above) and extended class-$\mathcal K$ functions $\alpha_i$, an alternative viability realization replaces the hard tangent-cone condition by
\begin{equation}
 \nabla h_i(\hat\theta)^\top v+\alpha_i(h_i(\hat\theta))\geq0,
 \label{eq:cbfadapt}
\end{equation}
inside \eqref{eq:onsager}. The resulting QP maximizes learning power subject to barrier constraints: the information objective selects the preferred adaptation velocity while the barrier inequalities render the admissible parameter set forward invariant \cite{ames2017}. Dualizing $g(\theta)\leq0$ instead gives \eqref{eq:primaldual}, where $-\nabla g^\top\lambda$ is a dynamical constraint reaction force. Projection and barrier viability enforce forward invariance; primal--dual flow addresses the same constraints through dynamic multipliers, without generally preserving transient feasibility.

Higher-order tuners already have variational and accelerated-optimization interpretations \cite{gaudio2021,wibisono2016,boffi2021}. Their constrained form gives a useful additional connection. Write an inertial tuner as $\dot{\hat\theta}=\nu$ and
\begin{equation}
 M_\nu\dot\nu=-\nabla\J_t(\hat\theta)-D\nu+u_a,
 \label{eq:hotforce}
\end{equation}
with inertia $M_\nu=M_\nu^\top\succ0$, damping $D=D^\top\succeq0$, and an auxiliary adaptation force $u_a\in\R^p$. For a twice continuously differentiable constraint function $h:\R^p\to\R$ with $h(\hat\theta)\geq0$, the force $u_a$ first appears in $\ddot h$; hence the constraint has relative degree two with respect to $u_a$. Defining $\psi_1\triangleq\dot h+\alpha_1(h)$ and imposing
\begin{equation}
 \dot\psi_1+\alpha_2(\psi_1)\geq0
 \label{eq:hocbfadapt}
\end{equation}
produces an affine constraint on $u_a$ through $\nabla h^\top M_\nu^{-1}u_a$, i.e., the standard high-order CBF mechanism \cite{xiao2022}. Thus first-order tuners admit CBF constraints on adaptation velocity, whereas inertial tuners naturally admit HOCBF constraints on adaptation force. Section~\ref{sec:stoch} lifts the constrained gradient flow and the balance \eqref{eq:movingpotential} to a diffusion on $\Theta$.

\section{Stochastic Adaptation and Free Energy}\label{sec:stoch}
\subsection{From gradient adaptation to Langevin dynamics}\label{sec:langevin}
In this section, assume $\Theta$ has nonempty interior and a $C^2$ boundary when $\partial\Theta\neq\emptyset$, with the regularity and integrability required for the reflected diffusion and density identities. The constrained stochastic counterpart of \eqref{eq:gradflow} is a reflected overdamped Langevin process
\begin{equation}
    d\Theta_t
    =-\Gamma\nabla\J(\Theta_t)dt
    +\sqrt{2\beta^{-1}\Gamma}\,dW_t+dK_t,
    \label{eq:langevin}
\end{equation}
where $\Theta_t\in\Theta$ is the stochastic parameter estimate, $W_t$ is a standard $p$-dimensional Wiener process, $\sqrt{\cdot}$ denotes the symmetric positive-definite square root, $\beta>0$ is the inverse temperature, and $K_t$ is a bounded-variation reflection term supported on $\partial\Theta$ and directed along the inward conormal $\Gamma n_{\rm in}$, with $n_{\rm in}$ the inward unit normal (so along $n_{\rm in}$ itself when $\Gamma=I$). Standard Langevin and Fokker--Planck background is given in \cite{risken1996}. In the zero-temperature limit it reduces to constrained deterministic adaptation; when $\Theta=\R^p$, $K_t=0$ and ordinary Langevin dynamics are recovered.

If $\J(\theta)=\frac12(\theta-\theta^\star)^\top G(\theta-\theta^\star)$ with $G\succ0$, \eqref{eq:langevin} is an Ornstein--Uhlenbeck process on $\R^p$, with a Gaussian invariant density; on a constrained domain it is the reflected counterpart. On a convex $\Theta$ with reflecting boundary, the invariant law is the corresponding truncated Gibbs density
\begin{equation}
    \rho_\infty(\theta)
    \triangleq Z_\Theta^{-1}e^{-\beta\J(\theta)}\mathbf 1_{\Theta}(\theta),\qquad
    Z_\Theta\triangleq\int_\Theta e^{-\beta\J(\theta)}d\theta,
    \label{eq:gibbs}
\end{equation}
where $\mathbf 1_\Theta$ is the indicator of $\Theta$; it reduces to the Gaussian density $\N(\theta^\star,(\beta G)^{-1})$ when $\Theta=\R^p$. It is independent of $\Gamma$, which changes only relaxation rates. When $\Theta=\R^p$, because the drift is affine and there is no reflection term, the mean $m$ and covariance $\Sigma$ satisfy
\begin{equation}
 \dot m=-\Gamma G(m-\theta^\star),\quad
 \dot\Sigma=-\Gamma G\Sigma-\Sigma G\Gamma+2\beta^{-1}\Gamma.
 \label{eq:moments}
\end{equation}
Thus excitation curvature controls stochastic concentration: for $\Gamma=I$, the stationary covariance $(\beta G)^{-1}$ assigns variance $(\beta\lambda_i)^{-1}$ to each eigendirection of $G$ with eigenvalue $\lambda_i>0$, so stronger excitation produces a narrower distribution. Under full excitation the quadratic $\J$ confines the estimate in every direction, whereas under partial excitation ($G\succeq0$ singular) the estimate is confined only transverse to $\ker G$ and diffuses along the feasible null directions. A hard constraint truncates those null directions, and a prior or a regularizer such as $R$ in \eqref{eq:regcurv} prevents the estimate from diffusing far along them, but neither changes the information supplied by the data. Section~\ref{sec:fp} describes the same diffusion at the level of its density.

\subsection{Fokker--Planck dynamics and variational free energy}\label{sec:fp}
In the interior of $\Theta$, the density $\rho_t$ of \eqref{eq:langevin} satisfies
\begin{equation}
    \partial_t\rho
    =\nabla\cdot\big(\Gamma\rho\nabla\J\big)
    +\beta^{-1}\nabla\cdot\big(\Gamma\nabla\rho\big),
    \label{eq:FP}
\end{equation}
with reflecting/no-flux boundary condition $n^\top\Gamma(\rho\nabla\J+\beta^{-1}\nabla\rho)=0$ on $\partial\Theta$, where $n$ is the outward unit normal. Define the constrained free energy
\begin{equation}
    \F[\rho]
    \triangleq\int_{\Theta}\J(\theta)\rho(\theta)d\theta
    +\beta^{-1}\int_{\Theta}\rho(\theta)\log\rho(\theta)d\theta.
    \label{eq:freeenergy}
\end{equation}
The first term is the expected information energy; the second is negative entropy scaled by temperature. Let $\mathcal P_2(\Theta)$ denote the probability densities on $\Theta$ with finite second moment, let $W_2$ denote the Wasserstein-2 distance on $\mathcal P_2(\Theta)$ \cite{villani2009}, and let $\operatorname{grad}_{W_2}\F[\rho]\triangleq-\nabla\cdot\big(\rho\nabla\tfrac{\delta\F}{\delta\rho}\big)$ denote the Wasserstein gradient of $\F$, where $\tfrac{\delta\F}{\delta\rho}=\J+\beta^{-1}(\log\rho+1)$ is the first variation \cite{ambrosio2008}. Jordan, Kinderlehrer, and Otto showed that for $\Gamma=I$ the Fokker--Planck equation \eqref{eq:FP} is the steepest descent of $\F$ in this metric \cite{jordan1998},
\begin{equation}
    \partial_t\rho_t=-\operatorname{grad}_{W_2}\F[\rho_t],
    \label{eq:wgf}
\end{equation}
and a general mobility $\Gamma$ corresponds to the transport cost weighted by $\Gamma^{-1}$. Deterministic adaptation \eqref{eq:gradflow} and \eqref{eq:wgf} are therefore the same dissipative flow generated by information energy plus an admissibility constraint, the first on a feasible parameter and the second on a probability law supported on $\Theta$.

For the Gibbs density $\rho_\infty$ in \eqref{eq:gibbs} and the relative entropy $\KL(\rho\|\rho_\infty)\triangleq\int_\Theta\rho\log(\rho/\rho_\infty)d\theta$, expanding the logarithm in \eqref{eq:freeenergy} gives
\begin{equation}
    \F[\rho]=\beta^{-1}\KL(\rho\|\rho_\infty)-\beta^{-1}\log Z_\Theta.
    \label{eq:KLfree}
\end{equation}
Thus free-energy dissipation is simultaneously relaxation toward a Gibbs law and reduction of a relative-information measure.

For a moving potential $\J_t$, the deterministic identity \eqref{eq:movingpotential} lifts to a nonequilibrium balance. Under standard smoothness and decay assumptions,
\begin{equation}
 \frac{d}{dt}\F_t[\rho_t]
 =\int_\Theta \partial_t\J_t(\theta)\rho_t(\theta)d\theta
 -\mathcal D_t[\rho_t],
 \label{eq:firstlaw}
\end{equation}
where
\begin{equation}
 \mathcal D_t[\rho]\triangleq\int_\Theta \rho\,\|\nabla(\J_t+\beta^{-1}\log\rho)\|_\Gamma^2d\theta\ge0.
\end{equation}
Equation \eqref{eq:firstlaw} follows by differentiating \eqref{eq:freeenergy}, substituting \eqref{eq:FP}, and integrating by parts, the no-flux condition removing boundary power; it is the first-law balance of stochastic thermodynamics \cite{seifert2012}: the free-energy rate equals the work done by the time variation of $\J_t$ minus the dissipation. In the zero-temperature Dirac-mass limit it reduces formally to \eqref{eq:movingpotential}. Hence the same obstruction appears in deterministic and stochastic adaptation: the information energy need not decrease instantaneously because the time variation of $\J_t$ can inject power. Section~\ref{sec:mi} replaces local curvature by finite information gain.

\subsection{Mutual information and experiment design}\label{sec:mi}
The Fisher interpretation can be strengthened from local curvature to finite information gain. For sampled observations $y_k=\phi_k^\top\theta^\star+w_k$, $k=1,\ldots,n$, with independent $w_k\sim\N(0,\sigma^2)$ and a Gaussian prior $\theta^\star\sim\N(\theta_0,P_0)$, let $S_n\triangleq\sum_{k=1}^n\phi_k\phi_k^\top$, let $P_n$ denote the posterior covariance, and let $\operatorname{MI}(\theta^\star;y_{1:n})$ denote the mutual information between $\theta^\star$ and $y_{1:n}\triangleq(y_1,\ldots,y_n)$ \cite{cover2006}. Then
\begin{equation}
\begin{aligned}
 P_n^{-1}&=P_0^{-1}+\sigma^{-2}S_n,\\
 \operatorname{MI}(\theta^\star;y_{1:n})&=\frac12\log\det(I+\sigma^{-2}P_0S_n).
\end{aligned}
\label{eq:mutualinfo}
\end{equation}
Thus an unexcited direction acquires no data information, while full-rank accumulated excitation contracts posterior uncertainty in every identifiable direction. If prior knowledge restricts $\theta^\star$ to $\Theta$, the posterior is truncated: the reduction in uncertainty volume is prior information, not excitation. Maximizing $\log\det S_n$ is the familiar D-optimal experiment-design criterion \cite{goodwinpayne1977,pukelsheim2006}. From this viewpoint, controlled excitation shapes an information volume, making the classical regulation--excitation tension a control-versus-information-design tradeoff. Section~\ref{sec:unified} extends the dictionary to nonlinear parameterizations.

\section{Extension to Nonlinear Parameterizations}\label{sec:unified}
The machinery of the preceding sections also applies when the model is nonlinear in its parameters. The main example in adaptive control is the Lyapunov-based deep neural network (Lb-DNN) architecture. A twice continuously differentiable DNN $\Phi:\Omega\times\R^p\to\R^q$ replaces the regressor, its weight vector $\theta\in\R^p$ collects all layers, and the weights of every layer are adapted online by update laws that are derived from a Lyapunov analysis rather than trained offline \cite{patil2022,patil2023composite}. The analysis rests on a first-order expansion of the network in its weights, $\Phi(x,\theta^\star)=\Phi(x,\hat\theta)+\Phi'(x,\hat\theta)\tilde\theta+\mathcal O(\|\tilde\theta\|^2)$, where $\Phi'\triangleq\partial\Phi/\partial\theta\in\R^{q\times p}$ is the Jacobian. The Jacobian therefore plays the role of $\phi^\top$ in \eqref{eq:regression}, and the remainder enters the closed loop as a bounded disturbance.

Every object in the dictionary has a counterpart in this setting. The accumulated loss is \eqref{eq:generalJ} with an input-space measure, and its Hessian has three parts: the Jacobian Gramian $\int_\Omega\Phi'^\top\Phi'\,dx$, the regularizer $\kappa I$ of \eqref{eq:regcurv}, and a residual term proportional to the approximation error, which is bounded on a compact set. Data curvature and regularization contribute to confinement, while residual curvature need not be positive. Uniqueness of the regularized loss minimizer over a compact convex weight set follows if the full Hessian is positive definite throughout the set. A sufficient condition combines a uniformly positive-definite Jacobian Gramian with regularization chosen to dominate residual curvature; stored-input rank tests evaluate the Gramian component of this condition \cite{hart2026}. Stored-data prediction gradients provide the concurrent-learning contribution to \eqref{eq:gradflow}, and the smallest eigenvalue of the stacked Jacobian Gramian controls local linearized dissipation, together with the adaptive gains and remainder bounds \cite{hart2025,hart2026}. Continuous weight symmetries can make this Gramian singular, while discrete symmetries can yield equivalent parameters without a local rank defect. Regularization can restore uniqueness of the loss minimizer when the full-Hessian condition holds, at a cost in approximation accuracy \cite{hart2026}.

The open questions are the ones the dictionary makes visible. Which stored inputs maximize the information volume of the Jacobian Gramian is the D-optimal design problem of Section~\ref{sec:mi} in input space. How much regularization to keep once the data supply sufficient curvature is a data-dependent confinement problem. Structured weight sets call for the barrier realizations of Section~\ref{sec:onsager}, and flat weight directions call for the Langevin lift \eqref{eq:langevin}. The deterministic additions keep the closed-loop certificate \eqref{eq:lyapenergy} whenever the added term has the form $-M\nabla\J$ of \eqref{eq:metriplectic} for a convex $\J$ minimized at $\tilde\theta=0$, and the Langevin lift keeps the free-energy balance \eqref{eq:firstlaw} in place of the pointwise certificate. Section~\ref{sec:conclusion} concludes.

\section{Conclusion}\label{sec:conclusion}
A temporally accumulated prediction loss provides a common language for adaptive learning: its curvature is the excitation Gramian and, under Gaussian observations, Fisher information; memory retains curvature, concurrent learning supplies parameter-space dissipation, and constraints act through admissible geometry rather than data information. Projected, primal--dual, barrier-constrained, Langevin, and free-energy formulations are different realizations of the same information--dissipation--constraint structure. The perspective makes the learning and identifiability dynamics of Lyapunov-based adaptive control explicit, and the metriplectic decomposition shows how any convex information energy can be added to a certified design.

\balance
\bibliographystyle{IEEEtran}
\bibliography{references}

\end{document}